\newcommand{\todo}[1]{{\sffamily To do:}}
\newtheorem{theorem}{Theorem}
\newtheorem {lemma}{Lemma}
\newenvironment{proof}{{\flushleft \emph{Proof}:}}{\ding{110}}
\title{Directed Random Market: the equilibrium distribution}
\author{Guy Katriel\\ Department of Mathematics, ORT Braude College,\\ Karmiel, Israel\\}
\date{}
\begin{document}

\maketitle

\begin{abstract} We find the explicit expression for the equilibrium wealth distribution of the Directed Random Market process, recently introduced by Mart\'inez-Mart\'inez and L\'opez-Ruiz \cite{martinez}, which turns out to be a Gamma distribution with shape parameter $\frac{1}{2}$. We also prove the convergence of the
discrete-time process describing the evolution of the distribution of wealth to the equilibrium distribution.
\end{abstract}

\section{Introduction}

In recent years a variety of kinetic exchange models have been investigated within the emerging discipline of Econophysics (see reviews  \cite{ghosh,yakovenko}).
These models involve a population of agents, each possessing a certain amount of wealth, undergoing random pairwise exchanges of wealth according to
certain predefined rules. One is mainly interested in the evolution of the wealth distribution as the process unfolds, and in particular
in its convergence to a limiting distribution and in the characteristics of this equilibrium distribution.

In this work we study the Directed Random Market process, recently introduced by Mart\'inez-Mart\'inez and L\'opez-Ruiz \cite{martinez}.
In this model, when two agents interact, one of them is randomly chosen and transfers a fraction $\epsilon$ of its wealth to the other agent, where
$\epsilon \in [0,1]$ is a uniformly distributed random number. Thus if agents $i,j$ with wealths $m_i,m_j$, interact, and if $j$ is the `winner', then
the agents' wealth following the interaction is given by
\begin{equation}\label{inter}m_i'=(1-\epsilon) m_i,\;\;\;m_j'=m_j+\epsilon m_i.\end{equation}

This process should be contrasted with the well-known Dr\u{a}gulescu - Yakovenko process \cite{drag}, in which the two agents' wealths are pooled and then randomly re-divided, so that
$$m_i'=\epsilon (m_i+m_j),\;\;\;m_j'=(1-\epsilon)(m_i+m_j).$$
Thus in Directed Random Market, exchanges are always uni-directional rather than bi-directional as in the Dr\u{a}gulescu - Yakovenko model.
As we shall see below, this leads to a significant difference in the eventual distribution of wealth at the macro-level.
It should be noted that kinetic exchange models with uni-directional exchanges are already present in the pioneering work of J. Angle \cite{angle,angle1}, but
in Angle's Inequality Process the fraction $\epsilon$ is either fixed for all agents, or a fixed number associated to each agent, rather than a
random number independently drawn in each exchange, as in the Directed Random Market.

Mart\'inez-Mart\'inez and L\'opez-Ruiz \cite{martinez} derived the functional iterations for the discrete-time evolution of the
probability density describing the wealth distribution $p_t(x)$, in the limit of infinitely many agents, so that $p_t(x)dx$ is the fraction of the population whose wealth is in the interval $[x,x+dx]$ at time $t=0,1,2,...$.
It is assumed that
in each time step all agents are randomly paired, and interact according to (\ref{inter}). The functional iterations are given (in slightly different notation than
used in \cite{martinez}) by
\begin{equation}\label{recursion}p_{t+1}(x)=T[p_t](x),\end{equation}
where
\begin{equation}\label{defT}T[p](x)=\frac{1}{2}\int_0^x p_t(x-u) \int_{u}^\infty  \frac{1}{v} p_t(v) dv  du+\frac{1}{2} \int_x^\infty \frac{1}{u} p_t(u)du.\end{equation}
Based on numerical evaluation of these iterations, Mart\'inez-Mart\'inez and L\'opez-Ruiz \cite{martinez} noted that the wealth distribution `piles up' at low values of wealth.

Here we will find an explicit expression for the equilibrium distribution, that is the solutions of $p_w^*(x)$ of the fixed-point problem
$$T[p_w^*]=p_w^*,$$
parameterized by the mean wealth $w$, which is a conserved quantity of the iterations (\ref{recursion}).
\begin{theorem}\label{tip} The equilibrium distributions for the Directed Random Market process are given by the probability density ($w>0$):
\begin{equation}\label{exd}p_w^{*}(x)=\frac{1}{\sqrt{2w \pi x}}e^{-\frac{x}{2w}}.\end{equation}
\end{theorem}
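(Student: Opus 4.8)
The plan is to verify directly that the proposed density $p_w^*(x)=\frac{1}{\sqrt{2w\pi x}}e^{-x/(2w)}$ satisfies the fixed-point equation $T[p_w^*]=p_w^*$, where $T$ is the operator defined in (\ref{defT}). Since the statement already hands us the candidate solution, the cleanest route is substitution rather than solving the functional equation from scratch. I would first compute the inner integral $I(u)=\int_u^\infty \frac{1}{v}p_w^*(v)\,dv = \int_u^\infty \frac{1}{v}\cdot\frac{1}{\sqrt{2w\pi v}}e^{-v/(2w)}\,dv$. The substitution $v=2w s^2$ (so that $\sqrt{v}$ becomes linear in $s$) should collapse the $v^{-3/2}$ factor and the Gaussian exponent into a clean complementary-error-function form; I expect $I(u)$ to come out proportional to $u^{-1/2}e^{-u/(2w)}$ plus an $\operatorname{erfc}$-type term.

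\emph{Second}, I would substitute $I(u)$ into the first term of $T[p_w^*](x)$, namely $\frac{1}{2}\int_0^x p_w^*(x-u)\,I(u)\,du$. This is a convolution-type integral over $[0,x]$, and here the fact that $p_w^*$ is (up to normalization) the density of $2wZ^2$ for a standard normal $Z$ is the structural key: the integral $\int_0^x (x-u)^{-1/2}u^{-1/2}(\cdots)\,du$ is a Beta-type kernel, and the substitution $u=x\sin^2\theta$ (or $u=xt$) is the natural way to evaluate it. Combining this with the second term $\frac{1}{2}\int_x^\infty \frac{1}{u}p_w^*(u)\,du=\frac{1}{2}I(x)$, I would collect the pieces and check that all the $\operatorname{erfc}$ contributions cancel, leaving exactly $\frac{1}{\sqrt{2w\pi x}}e^{-x/(2w)}$.

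A conceptually slicker alternative, which I would present if the direct calculation becomes unwieldy, is probabilistic. The claim is equivalent to saying that if $M\sim p_w^*$ and, independently, $\epsilon\sim\mathrm{Unif}[0,1]$, then the random variable obtained by the one-step update is distributed as $p_w^*$ again. Concretely, $p_w^*$ is the law of $2wZ^2$ with $Z$ standard normal, i.e. a Gamma$(\tfrac12,2w)$ law. One can interpret the two terms of $T$ via a symmetry argument: with probability $\tfrac12$ an agent is the loser and keeps $(1-\epsilon)M$ (contributing the $\frac{1}{2}\int_x^\infty\frac1u p\,du$ term after marginalizing over $\epsilon$), and with probability $\tfrac12$ it is the winner and holds $M+\epsilon M'$ for an independent copy $M'$ (the convolution term). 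Verifying stationarity then reduces to a statement about self-decomposability / moment generating functions of the Gamma$(\tfrac12,\cdot)$ family, which can be checked by computing the Laplace transform $\widehat{p_w^*}(\lambda)=(1+2w\lambda)^{-1/2}$ and showing the transformed fixed-point equation holds as an identity in $\lambda$.

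The main obstacle is the inner integral and the ensuing cancellation of the error-function terms: the density carries an $x^{-1/2}$ singularity at the origin, so one must confirm that all integrals converge and that the non-elementary $\operatorname{erfc}$ pieces generated by the two terms of $T$ exactly annihilate one another rather than leaving a residual transcendental term. I expect the Laplace-transform version to sidestep this difficulty entirely, turning the cancellation into a transparent algebraic identity in $\lambda$, so I would lead with the direct substitution for concreteness but fall back on the transform computation to make the cancellation manifest.
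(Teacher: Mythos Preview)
Your proposal is sound as a \emph{verification} that $p_w^*$ is a fixed point of $T$, but it differs from the paper's approach, which \emph{derives} the fixed points rather than checking a candidate. The paper takes the Laplace transform of the fixed-point equation $T[p]=p$, obtaining
\[
\hat p(s)=\frac{1}{2s}\,[\hat p(s)+1]\int_0^s \hat p(s')\,ds',
\]
then differentiates to get the separable ODE $\hat p'(s)=\frac{1}{2s}\big((\hat p(s))^2-1\big)\hat p(s)$, solves it to find $\hat p(s)=(1+Cs)^{-1/2}$, and inverts. This buys uniqueness for free: every fixed point (with a given mean $w$) must be the Gamma$(\tfrac12,2w)$ density. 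Your verification route only shows that $p_w^*$ is \emph{a} fixed point; the phrase ``the equilibrium distributions \dots\ are given by'' is naturally read as a full characterization, and your argument would need to defer uniqueness to the later convergence theorem.

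On the two variants you propose: the direct substitution does work, but exactly as you anticipate it produces an $\operatorname{erfc}$ term in $I(u)$ (in fact $I(u)=2p_w^*(u)-\frac{1}{w}\operatorname{erfc}\!\big(\sqrt{u/(2w)}\big)$), and closing the argument then requires the nontrivial identity
\[
e^{-x/(2w)}-\operatorname{erfc}\!\Big(\sqrt{\tfrac{x}{2w}}\Big)=\int_0^x p_w^*(x-u)\,\operatorname{erfc}\!\Big(\sqrt{\tfrac{u}{2w}}\Big)\,du,
\]
which is true but awkward to prove without transforms. Your Laplace-transform fallback is therefore the right instinct and is essentially the paper's method run in reverse: plugging $\hat p(s)=(1+2ws)^{-1/2}$ into the transformed equation is a one-line check, whereas the paper solves the ODE to find this $\hat p$ in the first place. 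If you want a complete proof matching the theorem's scope, either adopt the ODE derivation or add a separate uniqueness argument.
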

Note that the density (\ref{exd}) corresponds to the Gamma distribution with shape parameter $\frac{1}{2}$ and mean $w$.
This can be compared with the equilibrium distribution for the Dr\u{a}gulescu - Yakovenko process, which is an exponential (Boltzmann-Gibbs) distribution:
\begin{equation}\label{drage}p_w^{**}(x)=\frac{1}{w}e^{-\frac{x}{w}}.\end{equation}
The equilibrium density (\ref{exd}), in contrast with (\ref{drage}), goess to infinity as $x$ goes to zero, which accounts for the `piling up' at low values of wealth noted in \cite{martinez} (see figure \ref{exa3} for plots of the two densities). We note also that the coefficient of variation (standard deviation divided by mean) of
(\ref{drage}) is $CV=1$, while that of (\ref{exd}) is $CV=\sqrt{2}$, which shows that the Directed Random Market leads to a
higher degree of inequality.

\begin{figure}\label{exa3}
\centering
   \includegraphics[height=7cm,width=12cm, angle=0]{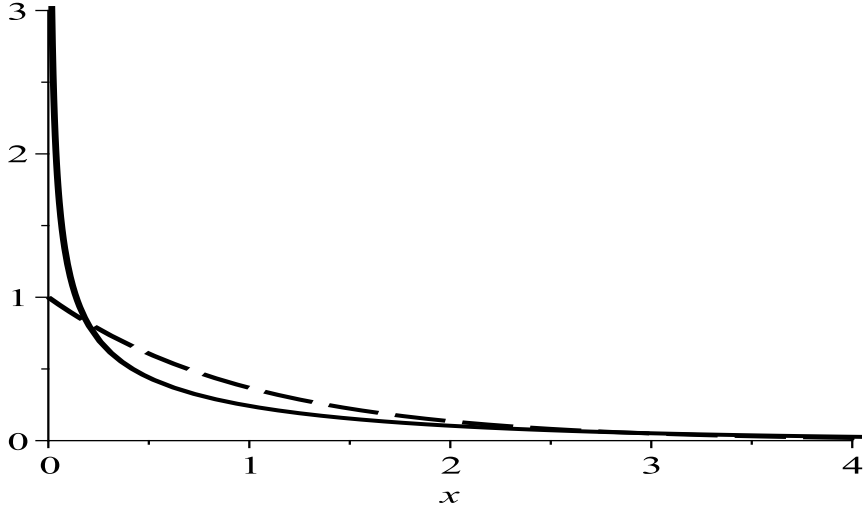}
   \caption{Density of the equilibrium distributions for the Directed Random Market and the Dr\u{a}gulescu - Yakovenko process (dashed line).  In both cases
   the mean wealth is $w=1$.}
\end{figure}

In section \ref{cipp} we prove the convergence of the wealth distribution to the equilibrium distribution, that is,
starting with an arbitrary distribution $p_0(x)$ with mean wealth
\begin{equation}\label{mean}w=\int_0^\infty xp_0(x)dx,\end{equation}
the distributions with densities $p_t$ given by (\ref{recursion}) converge, as $t\rightarrow \infty$, to the distribution with density $p_w^{*}$ given by (\ref{exd}).

\section{Equilibrium distribution for the Directed Random Market}\label{ed}

Equilibrium distributions for the Directed Random Market are fixed points of the operator $T$ given by (\ref{defT}), that is solutions of the functional equation
\begin{equation}\label{eq}p(x)= \frac{1}{2}\int_0^x p(x-u) \int_{u}^\infty  \frac{1}{v} p(v) dv  du+\frac{1}{2} \int_x^\infty \frac{1}{u} p(u)du.\end{equation}

To solve (\ref{eq}) we will use the Laplace transform
$$\hat{p}(s)={\cal{L}}[p](s)=\int_0^\infty e^{-sx}p(x)dx.$$

Using standard properties of the Laplace transform, we have
$${\cal{L}}\Big[\frac{1}{x} p(x)\Big](s)= \int_0^s \hat{p}(s')ds'\;\;\Rightarrow\;\;{\cal{L}}\Big[\int_{x}^\infty  \frac{1}{v} p(v) dv\Big](s) = \frac{1}{s}\int_0^s \hat{p}(s')ds'$$
$$\Rightarrow\;\;{\cal{L}}\Big[\int_0^x p(x-u) \int_{u}^\infty  \frac{1}{v} p(v) dv  du\Big](s)= \hat{p}(s)\cdot\frac{1}{s}\int_0^s \hat{p}(s')ds',$$
$${\cal{L}}\Big[\int_x^\infty \frac{1}{u} p(u)du\Big](s)  = \frac{1}{s}\int_0^s\hat{p}(s')ds',$$
hence
\begin{equation}\label{lll}{\cal{L}}[T[p]](s)=\frac{1}{2s}\cdot[\hat{p}(s)+1 ]\cdot\int_0^s \hat{p}(s')ds',\end{equation}
so that the Laplace-transformed version of (\ref{eq}) is
\begin{equation}\hat{p}(s)=\frac{1}{2s}\cdot[\hat{p}(s)+1 ]\cdot\int_0^s \hat{p}(s')ds'.\end{equation}
This equation can now be readily solved. Isolating the integral and differentiating both sides, we obtain
$$\Big[\frac{s\hat{p}(s)}{\hat{p}(s)+1}\Big]'=\frac{1}{2}\hat{p}(s),$$
that is
$$\hat{p}'(s)=\frac{1}{2s}[(\hat{p}(s))^2-1]\hat{p}(s),$$
a separable differential equation which is solved to yield:
$$\hat{p}(s)=\frac{1}{\sqrt{1+Cs}},$$
and the inverse Laplace transform gives:
$$p(x)=\frac{1}{\sqrt{C\pi x}}e^{-\frac{x}{C}},$$
and determining $C$ by the condition $\int_0^\infty p(x)dx=w$ we obtain $C=2w$, giving the result of Theorem \ref{tip}.

\section{Convergence to the equilibrium distribution}\label{cipp}

In this section we prove that the functional iterations given by (\ref{recursion}), starting
from a general wealth distribution $p_0$, indeed converge
to the equilibrium distribution $p_w^{*}$ given by Theorem \ref{tip}, where $w$ is the initial mean-wealth given by (\ref{mean}).

The proof uses the framework and ideas that we used in \cite{katriel}, following earlier work of L\'opez, L\'opez-Ruiz, and  Calbet \cite{lopez1,lopez2}, where an analogous result was proved for
the iterations arising from the Dr\u{a}gulescu - Yakovenko process. The method employs some key ideas used in the study of related
continuous-time Boltzmann-type equations describing exchange processes, see works of D\"uring, Matthes and Toscani \cite{during,matthes}.

We define by ${\cal{P}}$ the set of all probability densities on $[0,\infty)$, that is the set of a.e. non-negative functions $p\in L^1[0,\infty)$, with $\|p\|_{L^1}=1$.
For any real $\alpha\geq 0 $ the $\alpha$-moment of $p$ is defined as
$$M_\alpha(p)=\int_0^\infty x^\alpha p(x)dx.$$
In particular $M_1(p)$ is the mean wealth corresponding to the density $p$.
For $\alpha\geq 1,w>0$ we define
$${\cal{P}}_{\alpha,w}=\{ p\in {\cal{P}}\;|\; M_\alpha(p)<\infty,\;\;M_1(p)=w\}.$$
Convergence to the equilibrium density $p^{*}_w$ will be proven for initial distributions $p_0 \in {\cal{P}}_{\alpha,w}$ for some $\alpha>1$, although we
conjecture that the result is also true for $p_0 \in {\cal{P}}_{1,w}$.

To each of the probability densities $p_t$ we associate its cumulative probability function
\begin{equation}\label{cum}F_t(x)=\int_0^x p_t(u)du.\end{equation}
We also define the cumulative probability functions associated with the equilibrium densities $p_w^{*}$:
$$F_w^{*}(x)= \int_0^x p^{*}_w(u)du= \frac{1}{\sqrt{2w\pi}}\int_0^x \frac{1}{u}e^{-\frac{u}{2w}}du=\Phi\Big(\sqrt{\frac{x}{2w}} \Big),$$
where $\Phi(x)=\frac{2}{\sqrt{\pi}}\int_0^x e^{-z^2}dz$ is the error function.

Our convergence result is
\begin{theorem}\label{conip} Assume $\alpha>1$, let $p_0\in {\cal{P}}_{\alpha,w}$ be an arbitrary initial wealth distribution, and let the sequence $p_t$ be defined by (\ref{recursion}).
Then the sequence $F_t$ defined by (\ref{cum}) satisfies
$$\lim_{t\rightarrow \infty} F_t(x)=F_w^{*}(x),\;\;\;\forall x\geq 0.$$
\end{theorem}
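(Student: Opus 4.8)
The plan is to pass to the Laplace transform, where the iteration (by the computation leading to \eqref{lll}) reads $\hat p_{t+1}(s) = \frac{1}{2s}\,[\hat p_t(s)+1]\int_0^s \hat p_t(s')\,ds'$ and the equilibrium is the fixed point $\hat p^*_w(s)=(1+2ws)^{-1/2}$. I would prove convergence at the level of transforms, namely $\hat p_t(s)\to \hat p^*_w(s)$ for every $s>0$, and then recover the stated pointwise convergence of the cumulative functions $F_t$. For that last step, note that all $p_t$ share the same mean $w$, so by Markov's inequality the family $\{p_t\}$ is tight; pointwise convergence of the Laplace transforms to the transform of the probability measure $p^*_w$ then forces weak convergence (the continuity theorem for Laplace transforms of measures on $[0,\infty)$), and since $F^*_w$ is continuous this yields $F_t(x)\to F^*_w(x)$ at every $x\ge 0$.

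For the transform convergence I would fix $\gamma=\min(\alpha,2)\in(1,2]$ and work in the weighted metric $d_\gamma(p,q)=\sup_{s>0}|\hat p(s)-\hat q(s)|/s^\gamma$ on the class of densities with mean $w$. Finiteness of $d_\gamma(p_0,p^*_w)$ follows from moment information: using the elementary bound $|e^{-sx}-1+sx|\le C_\gamma (sx)^\gamma$ one gets $\hat p(s)=1-ws+O(s^\gamma)$ whenever $M_\gamma(p)<\infty$, so that matching means makes $\hat p_0(s)-\hat p^*_w(s)=O(s^\gamma)$ as $s\to0^+$, while the crude bound $|\hat p_0(s)-\hat p^*_w(s)|\le 2$ makes the ratio $O(s^{-\gamma})$ as $s\to\infty$; continuity on $(0,\infty)$ then gives a finite supremum. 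Here I use that $M_\gamma(p_0)<\infty$ (since $\gamma\le\alpha$), that $p^*_w$ has finite moments of all orders, and that $M_1$ is conserved so that every $p_t$ remains in the mean-$w$ class.

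The heart of the argument is a contraction estimate for $T$ in $d_\gamma$. Writing $u=\hat p$, $v=\hat q$ and using the transform of $T$, a short manipulation gives $\widehat{T[p]}(s)-\widehat{T[q]}(s)=\frac{1}{2s}\big[(u(s)+1)\int_0^s (u-v)+ (u(s)-v(s))\int_0^s v\big]$. Bounding $|u+1|\le 2$, $|\int_0^s v|\le s$ (as $|v|\le1$), and $|\int_0^s(u-v)|\le d_\gamma(p,q)\,s^{\gamma+1}/(\gamma+1)$, I obtain $|\widehat{T[p]}(s)-\widehat{T[q]}(s)|\le \tfrac{\gamma+3}{2(\gamma+1)}\,d_\gamma(p,q)\,s^\gamma$, hence $d_\gamma(T[p],T[q])\le \lambda(\gamma)\,d_\gamma(p,q)$ with $\lambda(\gamma)=\frac{\gamma+3}{2(\gamma+1)}$. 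Since $\lambda$ is decreasing with $\lambda(1)=1$, we have $\lambda(\gamma)<1$ for $\gamma>1$. Taking $q=p^*_w$ (so $\widehat{T[q]}=\hat q$) and iterating yields $d_\gamma(p_t,p^*_w)\le \lambda(\gamma)^t\,d_\gamma(p_0,p^*_w)\to0$; in particular $|\hat p_t(s)-\hat p^*_w(s)|\le s^\gamma d_\gamma(p_t,p^*_w)\to0$ for each fixed $s$, which is exactly the transform convergence needed above. Since the estimate also keeps $d_\gamma$ finite along the orbit, no separate moment-propagation lemma is required.

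I expect the main obstacle to lie in the reduction step rather than in the contraction, which is clean. Specifically, one must justify carefully that pointwise convergence of the Laplace transforms on $(0,\infty)$ upgrades to weak convergence and then to pointwise convergence of the $F_t$; this rests on tightness (from the conserved mean) together with the continuity theorem for Laplace transforms, and on the continuity of the limit $F^*_w$. A secondary point demanding care is the verification that $d_\gamma(p_0,p^*_w)<\infty$, i.e. the passage between finiteness of $M_\gamma$ and the $O(s^\gamma)$ behaviour of the transform near the origin; this is precisely where the hypothesis $\alpha>1$ (ensuring $\gamma>1$, and hence a genuine contraction factor $\lambda(\gamma)<1$) is essential.
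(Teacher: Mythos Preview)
Your proposal is correct and follows essentially the same route as the paper: Laplace-transform the iteration, establish contraction of $T$ in the Fourier-type metric $d_\gamma$ with the same constant $\lambda(\gamma)=\frac{1}{2}+\frac{1}{\gamma+1}=\frac{\gamma+3}{2(\gamma+1)}$, iterate against the fixed point $p_w^*$, and then upgrade pointwise transform convergence to convergence of the cumulative functions. The only cosmetic differences are that the paper cites a result of Mukherjea, Rao and Suen for the final upgrade step (where you invoke tightness plus the continuity theorem and continuity of $F_w^*$), and that the paper records the invariance $T:\mathcal{P}_{\alpha,w}\to\mathcal{P}_{\alpha,w}$ as a separate lemma, whereas you observe that conservation of $M_1$ together with the contraction already keeps $d_\gamma$ finite along the orbit.
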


We begin by showing that the classes ${\cal{P}}_{\alpha,w}$ are invariant under the action of $T$.
\begin{lemma}\label{inv} If $\alpha\geq 1$ and $p\in {\cal{P}}_{\alpha,w}$ then $T[p]\in {\cal{P}}_{\alpha,w}$.
\end{lemma}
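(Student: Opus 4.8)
The plan is to reduce everything to moment computations built from a single auxiliary identity, and to dispatch the three requirements---that $T[p]$ is a probability density, that it has mean $w$, and that its $\alpha$-moment is finite---in essentially one stroke. The starting observation is that, writing
$$g(u) = \int_u^\infty \frac{1}{v} p(v)\,dv,$$
the operator takes the compact form $T[p](x) = \frac{1}{2} g(x) + \frac{1}{2}\int_0^x p(x-u) g(u)\,du$, i.e. $T[p] = \frac12\big(g + p * g\big)$, where $p * g$ is the usual convolution on $[0,\infty)$. Non-negativity of $T[p]$ is then immediate, since $p \ge 0$ forces $g \ge 0$ and hence both summands are non-negative a.e.

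First I would establish the key identity
$$M_\beta(g) = \int_0^\infty u^\beta g(u)\,du = \frac{M_\beta(p)}{\beta+1}, \qquad \beta \ge 0,$$
which follows by Tonelli (all integrands non-negative) from swapping the order of integration in $\int_0^\infty u^\beta \int_u^\infty \frac1v p(v)\,dv\,du = \int_0^\infty \frac1v p(v)\int_0^v u^\beta\,du\,dv$. With this in hand the moments of $T[p]$ reduce to moments of $p$ via $M_\beta(T[p]) = \frac12 M_\beta(g) + \frac12 M_\beta(p * g)$, so the task is to control $M_\beta(p*g)$ for $\beta = 0, 1, \alpha$. For the integer cases the binomial expansion of $(y+u)^\beta$ after the substitution $x = y+u$ gives exact values: $M_0(p*g) = M_0(p) M_0(g) = 1$ and $M_1(p*g) = M_1(p) M_0(g) + M_0(p) M_1(g)$. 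Feeding in $M_0(p)=1$, $M_1(p)=w$ and the identity above yields $M_0(T[p]) = 1$ (so $T[p]$ is indeed a probability density) and $M_1(T[p]) = \frac12\cdot\frac{3w}{2} + \frac12\cdot\frac w2 = w$, as required.

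The one genuinely technical point---and the step I expect to be the main obstacle---is the finiteness of $M_\alpha(T[p])$ for a possibly non-integer $\alpha \ge 1$, where the exact binomial identity is unavailable. Here I would instead bound $M_\alpha(p * g) = \int_0^\infty\!\!\int_0^\infty (y+u)^\alpha p(y) g(u)\,dy\,du$ using the elementary convexity inequality $(y+u)^\alpha \le 2^{\alpha-1}(y^\alpha + u^\alpha)$, valid for $y,u \ge 0$ and $\alpha \ge 1$, to obtain $M_\alpha(p*g) \le 2^{\alpha-1}\big(M_\alpha(p) M_0(g) + M_0(p) M_\alpha(g)\big)$, which is finite by the identity since $M_\alpha(p) < \infty$. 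Combined with $M_\alpha(g) = M_\alpha(p)/(\alpha+1) < \infty$, this gives $M_\alpha(T[p]) < \infty$. Assembling the three pieces shows $T[p] \in {\cal P}_{\alpha,w}$. Throughout, every interchange of integrals is justified by Tonelli's theorem, since all integrands involved are non-negative, so no integrability hypotheses beyond those assumed are needed.
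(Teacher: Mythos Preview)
Your proposal is correct and follows essentially the same approach as the paper: the auxiliary identity $M_\beta(g)=M_\beta(p)/(\beta+1)$ via Tonelli is exactly the paper's change-of-order computation, and the convexity bound $(y+u)^\alpha\le 2^{\alpha-1}(y^\alpha+u^\alpha)$ is the same key inequality used there. Your convolution notation $T[p]=\tfrac12(g+p*g)$ streamlines the presentation, and you explicitly verify $M_0(T[p])=1$, which the paper leaves implicit.
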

\begin{proof}
We first need to prove the finiteness of
\begin{eqnarray}\label{i1}&&M_\alpha(T[p])=\int_0^\infty x^\alpha T[p](x)dx\\
& =& \frac{1}{2}\int_0^\infty x^\alpha \int_0^x p(x-u) \int_{u}^\infty  \frac{1}{v} p(v) dv  du dx+\frac{1}{2} \int_0^\infty x^\alpha\int_x^\infty \frac{1}{u} p(u)du dx.\nonumber\end{eqnarray}
By changing order of integration and using the inequality $(x+u)^\alpha\leq 2^{\alpha-1}(x^\alpha+u^\alpha)$ (for $\alpha\geq 1$) we obtain
\begin{eqnarray}\label{i2}&&\int_0^\infty x^\alpha \int_0^x p(x-u) \int_{u}^\infty  \frac{1}{v} p(v) dv  du dx\nonumber\\
&=&\int_0^\infty  \int_u^\infty x^\alpha p(x-u)dx \int_{u}^\infty  \frac{1}{v} p(v) dv   du\nonumber\\
&=&\int_0^\infty  \int_0^\infty (x+u)^\alpha p(x)dx \int_{u}^\infty  \frac{1}{v} p(v) dv   du\nonumber\\
&\leq& 2^{\alpha-1}\int_0^\infty  \int_0^\infty x^\alpha p(x)dx \int_{u}^\infty  \frac{1}{v} p(v) dv   du+\int_0^\infty  u^\alpha\int_0^\infty  p(x)dx \int_{u}^\infty  \frac{1}{v} p(v) dv   du\nonumber\\
&=& 2^{\alpha-1}M_\alpha(p) \int_0^\infty   \int_{u}^\infty  \frac{1}{v} p(v) dv   du+\int_0^\infty  u^\alpha \int_{u}^\infty  \frac{1}{v} p(v) dv   du\nonumber\\
&=& 2^{\alpha-1}M_\alpha(p) \int_0^\infty \frac{1}{v}p(v) \int_0^v du dv+\int_0^\infty  \frac{1}{v} p(v) \int_{0}^v u^\alpha  du dv \nonumber\\
&=& 2^{\alpha-1}M_\alpha(p) \int_0^\infty p(v)  dv+\frac{1}{\alpha+1}\int_0^\infty  v^{\alpha} p(v)  dv=2^{\alpha-1}\cdot \frac{\alpha+2}{\alpha+1}\cdot M_\alpha(p). \end{eqnarray}
We also have
\begin{eqnarray}\label{i3}&&\int_0^\infty x^\alpha\int_x^\infty \frac{1}{u} p(u)du dx=\int_0^\infty \frac{1}{u} p(u)\int_0^u x^\alpha  dx du \nonumber\\ &=&\frac{1}{\alpha+1}\int_0^\infty u^\alpha p(u)du
=\frac{1}{\alpha+1}M_\alpha(p).\end{eqnarray}
Combining (\ref{i1}),(\ref{i2}) and (\ref{i3}) we conclude that
\begin{equation}\label{bnd1}M_{\alpha}(T[p])\leq \frac{1}{2(\alpha+1)}\cdot\Big(2^{\alpha-1}\cdot (\alpha+2) +1\Big)\cdot M_\alpha(p),\end{equation}
and in particular we have $M_{\alpha}(T[p])<\infty$.

Taking $\alpha=1$, the inequality in (\ref{i2}), and hence the one in (\ref{bnd1}), is in fact an equality, so that we get $M_1(T[p])=M_1(p)=w$, and
we have shown $T[p]\in {\cal{P}}_{\alpha,w}$.
\end{proof}

We now define the following metric on the set ${\cal{P}}_{\alpha,w}$, where we now assume $\alpha\in (1,2)$.
$$p,q\in {\cal{P}}_{\alpha,w}\;\;\;\Rightarrow\;\;\; d_{\alpha}(p,q)=\sup_{s>0}\frac{|{\cal{L}}[p](s)-{\cal{L}}[q](s)|}{s^\alpha}.$$

The finiteness of $d_{\alpha,w}$ is ensured by the following Lemma (we refer to {\cite{katriel}}, Lemma 2.3, for the proof):
\begin{lemma}\label{fin} If $1<\alpha<2$, $w>0$, $p,q\in {\cal{P}}_{\alpha,w}$, then $d_{\alpha}(p,q)<\infty$.
\end{lemma}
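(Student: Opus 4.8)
The plan is to reduce the whole statement to a single elementary pointwise inequality for the function $e^{-t}-1+t$. Write $r=p-q$. Since $p$ and $q$ are both probability densities with the same mean $w$, the difference $r$ satisfies the two cancellation conditions $\int_0^\infty r(x)\,dx = M_0(p)-M_0(q)=0$ and $\int_0^\infty x\,r(x)\,dx = M_1(p)-M_1(q)=0$. These are exactly the two relations I need in order to subtract the first two terms of the Taylor expansion of the exponential from the transform difference; note also that $\int_0^\infty x^\alpha|r(x)|\,dx \le M_\alpha(p)+M_\alpha(q)<\infty$ by the definition of ${\cal{P}}_{\alpha,w}$, which is what will ultimately make the bound finite.

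Concretely, since ${\cal{L}}[p](s)-{\cal{L}}[q](s)=\int_0^\infty e^{-sx}r(x)\,dx$ and the constant and linear terms integrate to zero against $r$, I would rewrite
$$ {\cal{L}}[p](s)-{\cal{L}}[q](s) = \int_0^\infty \bigl(e^{-sx}-1+sx\bigr)\,r(x)\,dx. $$
(The splitting is legitimate because $\int|r|\le 2$ and $\int x|r|\le 2w$ are finite, so each piece converges absolutely.) This representation is the crux: the integrand now vanishes to second order at $x=0$, matching the scale $s^\alpha$ appearing in the denominator defining $d_\alpha$.

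Next I would establish the bound
$$ \bigl|e^{-t}-1+t\bigr| \le t^\alpha \qquad (t\ge 0,\ 1<\alpha<2). $$
The argument splits at $t=1$. For $0\le t\le 1$ one shows $0\le e^{-t}-1+t\le \frac{1}{2}t^2$ by checking that $g(t)=\frac{1}{2}t^2-(e^{-t}-1+t)$ satisfies $g(0)=g'(0)=0$ and $g''(t)=1-e^{-t}\ge 0$, and then $\frac{1}{2}t^2\le t^\alpha$ because $2-\alpha>0$ forces $t^{2-\alpha}\le 1$ on $[0,1]$. For $t\ge 1$ one uses $e^{-t}-1+t\le t$ together with $t\le t^\alpha$, valid since $\alpha>1$. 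This is precisely where the two endpoint restrictions enter: $\alpha<2$ controls the quadratic behaviour at the origin, and $\alpha>1$ controls the linear growth at infinity.

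Finally, substituting $t=sx$ and inserting the bound into the integral representation gives
$$ \bigl|{\cal{L}}[p](s)-{\cal{L}}[q](s)\bigr| \le \int_0^\infty (sx)^\alpha|r(x)|\,dx = s^\alpha\int_0^\infty x^\alpha|r(x)|\,dx. $$
Dividing by $s^\alpha$ and taking the supremum over $s>0$ yields the quantitative estimate $d_\alpha(p,q)\le M_\alpha(p)+M_\alpha(q)<\infty$. I do not expect a serious obstacle: the only delicate point is verifying that the single exponent $\alpha$ governs the integrand uniformly over all $t\ge 0$, i.e. that one bound is simultaneously compatible with both endpoints of the interval $1<\alpha<2$, and that the two moment conditions cancel exactly the two lowest-order Taylor terms. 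Everything else is routine.
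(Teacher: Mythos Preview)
Your argument is correct: the two moment cancellations allow you to subtract the constant and linear Taylor terms, and the elementary bound $|e^{-t}-1+t|\le t^\alpha$ for $1<\alpha<2$ (split at $t=1$, using $\alpha<2$ near the origin and $\alpha>1$ at infinity) then gives the quantitative estimate $d_\alpha(p,q)\le M_\alpha(p)+M_\alpha(q)<\infty$. All the justifications you flag (absolute integrability of each piece, nonnegativity of $e^{-t}-1+t$) are in order.

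As for comparison with the paper: the paper does not actually prove this lemma in the text; it simply refers the reader to Lemma~2.3 of \cite{katriel}. Your proof is therefore a self-contained replacement for that citation, and it is in fact the standard argument for finiteness of these Fourier/Laplace-based metrics (the same idea underlies the Gabetta--Toscani--Wennberg distance used in kinetic theory). So there is no divergence of method to discuss, only the observation that you have supplied the details the paper omits.
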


We now prove that the map $T:{\cal{P}}_{\alpha,w}\rightarrow {\cal{P}}_{\alpha,w}$ is contracting.
\begin{lemma}\label{lip} If $1< \alpha<2$, $w>0$, $p,q\in {\cal{P}}_{\alpha,w}$, then
$$d_{\alpha}(T[p],T[q])\leq \Big(\frac{1}{2}+\frac{1}{\alpha+1}\Big) d_\alpha(p,q).$$
\end{lemma}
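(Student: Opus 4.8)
The plan is to work entirely on the Laplace-transform side, exploiting the explicit formula (\ref{lll}) that was already derived. Writing $\hat p={\cal{L}}[p]$, $\hat q={\cal{L}}[q]$, and setting $I_p(s)=\int_0^s\hat p(s')\,ds'$ and $I_q(s)=\int_0^s\hat q(s')\,ds'$, formula (\ref{lll}) gives
$${\cal{L}}[T[p]](s)-{\cal{L}}[T[q]](s)=\frac{1}{2s}\Big[(\hat p(s)+1)I_p(s)-(\hat q(s)+1)I_q(s)\Big].$$
The first move is to decompose this difference by an add-and-subtract step, so as to isolate factors of the form $\hat p-\hat q$ (which the metric $d_\alpha$ controls) from factors that are merely bounded:
$$(\hat p+1)I_p-(\hat q+1)I_q=(\hat p+1)(I_p-I_q)+(\hat p-\hat q)\,I_q.$$

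Next I would supply the elementary bounds. Since $p,q$ are probability densities, $|\hat p(s)|\le 1$ and $|\hat q(s)|\le 1$ for all $s\ge 0$, hence $|\hat p(s)+1|\le 2$ and $|I_q(s)|=|\int_0^s\hat q(s')\,ds'|\le s$. The metric then supplies the two crucial estimates: directly, $|\hat p(s)-\hat q(s)|\le d_\alpha(p,q)\,s^\alpha$, and by integrating this pointwise bound, $|I_p(s)-I_q(s)|\le d_\alpha(p,q)\int_0^s (s')^\alpha\,ds'=d_\alpha(p,q)\,s^{\alpha+1}/(\alpha+1)$.

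Feeding these into the decomposition and factoring out $s^{\alpha+1}$ gives
$$|{\cal{L}}[T[p]](s)-{\cal{L}}[T[q]](s)|\le\frac{1}{2s}\Big(2\cdot\frac{d_\alpha(p,q)\,s^{\alpha+1}}{\alpha+1}+d_\alpha(p,q)\,s^\alpha\cdot s\Big)=\Big(\frac{1}{2}+\frac{1}{\alpha+1}\Big)d_\alpha(p,q)\,s^\alpha.$$
Dividing by $s^\alpha$ and taking the supremum over $s>0$ yields the claim. I expect no serious obstacle: the whole argument is a clean Laplace-domain estimate. The one point that genuinely requires care is the add-and-subtract decomposition itself — pairing $(\hat p+1)$ with the difference of integrals (rather than $(\hat q+1)$) is what ensures the factor multiplying $I_p-I_q$ is the bounded one, and the constant $\frac{1}{2}+\frac{1}{\alpha+1}$ must emerge exactly. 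For $1<\alpha<2$ this factor lies in $(\frac{5}{6},1)$, hence is strictly less than $1$, so $T$ is a contraction — which is presumably what the subsequent convergence argument will exploit via a fixed-point/iteration scheme.
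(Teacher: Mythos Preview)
Your proof is correct and essentially identical to the paper's: both work from the Laplace-transform identity (\ref{lll}), perform an add-and-subtract on the product $(\hat p+1)I_p$, and use $|\hat p|,|\hat q|\le 1$ together with the defining bound $|\hat p-\hat q|\le d_\alpha(p,q)s^\alpha$ to obtain the constant $\tfrac{1}{2}+\tfrac{1}{\alpha+1}$. The only cosmetic differences are that the paper first substitutes $s'=s\xi$ to write $\tfrac{1}{s}\int_0^s\hat p(s')\,ds'=\int_0^1\hat p(s\xi)\,d\xi$ and chooses the other direction of the add-and-subtract (pairing $\hat q+1$ with the integral difference); your aside that the choice of pairing matters is unnecessary, since both $\hat p+1$ and $\hat q+1$ are bounded by $2$.
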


\begin{proof} We set $\hat{p}(s)={\cal{L}}[p](s),\hat{q}(s)={\cal{L}}[q](s)$.
Making a change of variable $s'=s\xi$ in the integral in (\ref{lll}) we have
$${\cal{L}}[T[p]](s)=\frac{1}{2}[\hat{p}(s)+1 ]\int_0^1 \hat{p}(s\xi)d\xi.$$
Therefore
$${\cal{L}}[T[p]](s)-{\cal{L}}[T[q]](s) =\frac{1}{2}[\hat{p}(s)-\hat{q}(s) ]\int_0^1 \hat{p}(s\xi)d\xi+\frac{1}{2}[\hat{q}(s)+1 ]\int_0^1 [\hat{p}(s\xi)-\hat{q}(s\xi)]d\xi$$
and using the fact that, since $\|p\|_{L^1}=\|q\|_{L^1}=1$ we have $|\hat{p}(s)|\leq 1,|\hat{q}(s)|\leq 1$, we can estimate
$$\frac{|{\cal{L}}[T[p]](s)-{\cal{L}}[T[q]](s)|}{s^\alpha} \leq \frac{1}{2}\frac{|\hat{p}(s)-\hat{q}(s)|}{s^\alpha}\int_0^1 \hat{p}(s\xi)d\xi$$
$$+\frac{1}{2}[\hat{q}(s)+1 ]\int_0^1 \xi^\alpha \frac{|\hat{p}(s\xi)-\hat{q}(s\xi)|}{(s\xi)^\alpha} d\xi$$
$$\leq \frac{1}{2}d_\alpha(p,q)+d_\alpha(p,q)\int_0^1 \xi^\alpha  d\xi=\Big(\frac{1}{2}+\frac{1}{\alpha+1}\Big)d_{\alpha}(p,q),$$
and taking the supremum over $s>0$ we obtain the result.
\end{proof}

We therefore have
\begin{lemma}\label{cm}
If $1< \alpha<2$, $w>0$, $p_0\in {\cal{P}}_{\alpha,w}$, and the sequence $\{p_t\}_{t=0}^\infty$ is defined by (\ref{recursion}),  then
\begin{equation}\label{limi} \lim_{t\rightarrow \infty} d_\alpha(p_t,p_w^{*})=0. \end{equation}
\end{lemma}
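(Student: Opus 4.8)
The plan is to view this lemma as an immediate consequence of the contraction estimate in Lemma \ref{lip} together with the fact, established in Section \ref{ed}, that $p_w^{*}$ is a fixed point of $T$. Observe first that for $1<\alpha<2$ the contraction constant satisfies $c:=\frac{1}{2}+\frac{1}{\alpha+1}<1$, since $\frac{1}{\alpha+1}<\frac{1}{2}$ exactly when $\alpha>1$. Thus $T$ is a strict contraction on $({\cal{P}}_{\alpha,w},d_\alpha)$, and because the limit point we are aiming for is already known explicitly, we will not even need to invoke completeness or the Banach fixed point theorem: it suffices to apply the contraction inequality to the specific pair $(p_t,p_w^{*})$.

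First I would verify that $p_w^{*}\in {\cal{P}}_{\alpha,w}$, so that the quantities $d_\alpha(p_t,p_w^{*})$ are meaningful and finite. Since $p_w^{*}$ is the Gamma density with shape parameter $\frac{1}{2}$, its $\alpha$-moment $M_\alpha(p_w^{*})=\int_0^\infty x^\alpha \frac{1}{\sqrt{2w\pi x}}e^{-\frac{x}{2w}}dx$ converges for every $\alpha>-\frac{1}{2}$, in particular for $1<\alpha<2$, and its mean equals $w$ by the normalization fixing $C=2w$. With both $p_0$ and $p_w^{*}$ lying in ${\cal{P}}_{\alpha,w}$, Lemma \ref{fin} guarantees $d_\alpha(p_0,p_w^{*})<\infty$. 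I would also record that $T[p_w^{*}]=p_w^{*}$, which is precisely the content of Theorem \ref{tip}.

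Next comes the iteration. By Lemma \ref{inv} every iterate stays in ${\cal{P}}_{\alpha,w}$, so Lemma \ref{lip} applies to the pair $(p_t,p_w^{*})$ and gives
$$d_\alpha(p_{t+1},p_w^{*})=d_\alpha(T[p_t],T[p_w^{*}])\leq c\,d_\alpha(p_t,p_w^{*}).$$
A straightforward induction then yields $d_\alpha(p_t,p_w^{*})\leq c^t\, d_\alpha(p_0,p_w^{*})$, and since $c<1$ and the initial distance is finite, the right-hand side tends to $0$ as $t\rightarrow\infty$, which is (\ref{limi}).

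The point to be careful about — rather than a genuine obstacle — is making sure all the hypotheses of the borrowed lemmas are in force at each step: that $p_w^{*}$ really lies in the class ${\cal{P}}_{\alpha,w}$ so the metric is finite, and that the iterates never leave this class. Both are handled by the preceding results (the moment computation above, together with Lemmas \ref{inv} and \ref{fin}), so once these are checked the conclusion is immediate. All the analytic difficulty has already been absorbed into the contraction estimate of Lemma \ref{lip}, leaving only a geometric-decay argument here.
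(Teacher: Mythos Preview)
Your proposal is correct and follows essentially the same approach as the paper: use the contraction estimate from Lemma \ref{lip} together with $T[p_w^{*}]=p_w^{*}$ to obtain, by induction, the geometric decay $d_\alpha(p_t,p_w^{*})\leq c^t d_\alpha(p_0,p_w^{*})$ with $c=\frac{1}{2}+\frac{1}{\alpha+1}<1$. If anything, your write-up is slightly more careful than the paper's in explicitly verifying $p_w^{*}\in {\cal{P}}_{\alpha,w}$ and the finiteness of $d_\alpha(p_0,p_w^{*})$ before iterating.
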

\begin{proof}
We show that
\begin{equation}\label{inn}d_\alpha(p_t,p_w^{*})\leq d_\alpha(p_0,p_w^{*})\Big(\frac{1}{2}+\frac{1}{\alpha+1}\Big)^t,\end{equation}
which implies the result since $\alpha>1$ implies $\frac{1}{2}+\frac{1}{\alpha+1}<1$. For $t=0$, (\ref{inn}) is trivial. We proceed by induction, using
Lemma \ref{lip} and the fact that $p_w^{*}$ is a fixed point of $T$:
$$d_\alpha(p_{t+1},p_w^{*})=d_\alpha(T[p_t],T[p_w^{*}])\leq \Big(\frac{1}{2}+\frac{1}{\alpha+1}\Big)d_\alpha(p_t,p_w^{*})$$
$$\leq d_\alpha(p_0,p_w^{*}) \Big(\frac{1}{2}+\frac{1}{\alpha+1}\Big)^{t+1}.$$
\end{proof}

Convergence in the metric $d_{\alpha}$ implies pointwise convergence of the Laplace transform,
\begin{equation}\label{pcon}\lim_{t\rightarrow \infty}{\cal{L}}[p_t](s)= {\cal{L}}[p_w^{*}](s),\;\;\;s>0. \end{equation}
We note that the assumption $\alpha<2$ is irrelevant for the validity of (\ref{pcon}), since ${\cal{P}}_{\alpha,w}\subset {\cal{P}}_{\beta,w}$ when $\alpha>\beta$,
so we have (\ref{pcon}) whenever $p_0\in {\cal{P}}_{\alpha,w}$  for some $\alpha>1$.

Mukherjea et al. \cite{muk} proved that pointwise convergence of a sequence of Laplace transforms $\hat{p}_t(s)$ of probability densities $p_t$ to the Laplace transform $\hat{p}(s)$ of a probability density $p$, for all $s$ in some interval, implies the pointwise convergence of the corresponding cumulative probability
functions. Therefore the pointwise convergence of $\hat{p}_t$ to $\hat{p}_w^*$, which we have established, implies Theorem \ref{conip}.

\end{document}